\title{The Parameterized Hardness of the k-Center Problem in Transportation 
Networks}
\author[1]{Andreas Emil Feldmann\footnote{Supported by the Czech Science 
Foundation GA\v{C}R (grant \#19-27871X), and by the Center for 
Foundations of Modern Computer Science (Charles Univ.\ project UNCE/SCI/004).}}
\author[2]{D\'aniel Marx\footnote{Supported by ERC Consolidator Grant SYSTEMATICGRAPH (No.~725978)}}
\affil[1]{Department of Applied Mathematics, Charles University, Prague, 
Czechia\\ \texttt{feldmann.a.e@gmail.com}}
\affil[2]{Institute for Computer Science and Control, Hungarian Academy of 
Sciences\\ (MTA SZTAKI) \texttt{dmarx@cs.bme.hu}}
\date{}
\newcommand{\kc}{\pname{$k$-Center}\xspace}
\newcommand{\gti}{\pname{GT$_\leq$}\xspace}
\DeclareMathOperator{\dist}{dist}
\newcommand{\polyn}{\cdot n^{O(1)}}
\begin{document}

\maketitle

\begin{abstract}
In this paper we study the hardness of the \kc problem on inputs that model 
transportation networks. For the problem, a graph $G=(V,E)$ with edge lengths 
and an integer $k$ are given and a center set $C\subseteq V$ needs to be chosen 
such that $|C|\leq k$. The aim is to minimize the maximum distance of any 
vertex 
in the graph to the closest center. This problem arises in many applications of 
logistics, and thus it is natural to consider inputs that model transportation 
networks. Such inputs are often assumed to be planar graphs, low doubling 
metrics, or bounded highway dimension graphs. For each of these models, 
parameterized approximation algorithms have been shown to exist. We complement 
these results by proving that the \kc problem is W[1]\hy{}hard on planar graphs 
of constant doubling dimension, where the parameter is the combination of the 
number of centers $k$, the highway dimension $h$, and the pathwidth $p$. 
Moreover, under the Exponential Time Hypothesis there is no $f(k,p,h)\cdot 
n^{o(p+\sqrt{k+h})}$ time algorithm for any computable function $f$. Thus it is 
unlikely that the optimum solution to \kc can be found efficiently, even when 
assuming that the input graph abides to all of the above models for 
transportation networks at once!

Additionally we give a simple parameterized $(1+\eps)$-approximation algorithm 
for inputs of doubling dimension $d$ with runtime $(k^k/\eps^{O(kd)})\polyn$. 
This generalizes a previous result, which considered inputs in $D$-dimensional 
$L_q$ metrics.
\end{abstract}

\section{Introduction}

Given a graph $G=(V,E)$ with positive edge lengths $\ell:E\rightarrow 
\mathbb{Q}^+$, the \kc problem asks to find $k$ center vertices such that every 
vertex of the graph is as close as possible to one of the centers. More 
formally, a solution to \kc is a set $C\subseteq V$ of \emph{centers} such that 
$|C|\leq k$. If $\dist(u,v)$ denotes the length of the shortest path between 
$u$ and $v$ according to the edge lengths $\ell$, the objective is to minimize 
the \emph{cost} $\rho=\max_{u\in V}\min_{v\in C}\dist(u,v)$ of the solution 
$C$. While this is the standard way of defining the problem, throughout this 
paper we will rather think of it as covering the graph with balls of minimum 
radius. That is, let $B_v(r)=\{u\in V\mid\dist(u,v)\leq r\}$ be the \emph{ball 
of radius~$r$ around~$v$}. The cost of a solution $C$ equivalently is the 
smallest value $\rho$ for which $\bigcup_{v\in C}B_v(\rho)=V$. The \kc problem 
has numerous applications in logistics where easily accessible locations need 
to 
be chosen in a network under a budget constraint. For instance, a budget may be 
available to build $k$ hospitals, shopping malls, or warehouses. These should 
be 
placed so that the distance from each point on the map to the closest facility 
is minimized.

The \kc problem is NP-hard~\cite{Vazirani01book}, and so \emph{approximation 
algorithms}~\cite{Vazirani01book,williamson2011design} as well as 
\emph{parameterized algorithms}~\cite{pc-book,downey2013fpt} have been 
developed 
for this problem. The former are algorithms that use polynomial time to compute 
an \emph{$\alpha$-approximation}, i.e., a solution that is at most $\alpha$ 
times worse than the optimum. For the latter, a \emph{parameter} $q$ is given 
as 
part of the input, and an optimum solution is computed in $f(q)\polyn$ time for 
some computable function $f$ independent of the input size $n$. The rationale 
behind such an algorithm is that it solves the problem efficiently in 
applications where the parameter is small. If such an algorithm exists, the 
corresponding  problem is called \emph{fixed-parameter tractable (FPT)} 
for~$q$. 
Another option is to consider \emph{parameterized approximation 
algorithms}~\cite{lokshtanov2017lossy,marx2008fpa}, which compute an 
$\alpha$-approximation in $f(q)\polyn$ time for some parameter~$q$.

By a result of Hochbaum and Shmoys~\cite{hochbaum1986bottleneck}, on general 
input graphs, a polynomial time $2$-approximation algorithm exists, and this 
approximation factor is also best possible, unless~P=NP. A natural parameter 
for 
\kc is the number of centers~$k$, for which however the problem is 
W[2]-hard~\cite{demaine2005planar-center}, and is thus unlikely to be FPT. In 
fact it is even W[2]-hard~\cite{DBLP:conf/icalp/Feldmann15} to compute a 
$(2-\eps)$-approximation for any~$\eps>0$, and thus parametrizing by $k$ does 
not help to overcome the polynomial-time inapproximability. For structural 
parameters such as the vertex-cover number or the feedback-vertex-set number 
the 
problem remains W[1]-hard~\cite{katsikarelis2017structural}, even when 
combining 
with the parameter $k$. For each of the two more general structural parameters 
treewidth and cliquewidth, an \emph{efficient parameterized approximation 
scheme 
(EPAS)} was shown to exist~\cite{katsikarelis2017structural}, i.e., a 
$(1+\eps)$-approximation can be computed in $f(\eps,w)\polyn$ time for any 
$\eps>0$, if $w$ is either the treewidth or the cliquewidth, and $n$ is the 
number of vertices.

Arguably however, graphs with low treewidth or cliquewidth do not model 
transportation networks well, since grid-like structures with large treewidth 
and cliquewidth can occur in road maps of big cities. As we focus on 
applications for \kc in logistics, here we consider more natural models for 
transportation networks. These include \emph{planar graphs}, \emph{low doubling 
metrics} such as the Euclidean or Manhattan plane, or the more recently studied 
\emph{low highway dimension graphs}. Our main result is that \kc is W[1]-hard 
on 
all of these graph classes \emph{combined}, even if adding $k$ and the 
pathwidth 
as parameters (note that the pathwidth is a stronger parameter than the 
treewidth). Before introducing these graph classes, let us formally state our 
theorem.

\begin{thm}\label{thm:hardness}
Even on planar graphs with edge lengths of doubling dimension $O(1)$, the \kc 
problem is \textup{W[1]}-hard for the combined parameter $(k,p,h)$, where $p$ 
is 
the pathwidth and $h$ the highway dimension of the input graph. Moreover, under 
ETH there is no $f(k,p,h)\cdot n^{o(p+\sqrt{k+h})}$ time 
algorithm\footnote{Here 
$o(p+\sqrt{k+h})$ means $g(p+\sqrt{k+h})$ for any function $g$ such that 
$g(x)\in o(x)$.} for the same restriction on the input graphs, for any 
computable function $f$.
\end{thm}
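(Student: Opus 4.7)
The plan is to reduce from the \gti (Grid Tiling with $\leq$-constraints) problem on a $w \times w$ grid, which is W[1]-hard with parameter $w$ and, under ETH, has no $f(w)\cdot n^{o(w)}$-time algorithm. Since we are aiming for an ETH lower bound of the form $n^{o(p+\sqrt{k+h})}$, the target is to produce a \kc instance in which the pathwidth $p$, $\sqrt{k}$, and $\sqrt{h}$ are all $O(w)$, so that a hypothetical $f(k,p,h)\cdot n^{o(p+\sqrt{k+h})}$-time algorithm would yield an $f(w)\cdot n^{o(w)}$-time algorithm for \gti.

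The construction I would use embeds the $w \times w$ grid of \gti cells in the plane. For each cell I build a \emph{selection gadget}: a small cluster containing one ``representative vertex'' per admissible pair $(a,b)$ in that cell, arranged so that any cheap \kc solution is forced to place exactly one center inside each gadget, and the placement encodes which pair is picked. Horizontally and vertically adjacent cells are connected by \emph{consistency gadgets}: paths (or path-bundles) of carefully chosen lengths so that the chosen centers in two neighbouring selection gadgets can cover the gadget between them within the target radius $\rho$ if and only if the first coordinates agree (for horizontal neighbours) or the second coordinates agree (for vertical neighbours). The $\leq$ version of grid tiling is well-suited here because one-sided inequalities translate naturally to ``the closer center covers'', which can be made planar without crossings. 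The total number of centers is $k = w^2$, and I would set $\rho$ to be exactly the radius at which valid tilings correspond to feasible coverings.

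To obtain all the claimed structural bounds simultaneously, I would design each gadget to be a small planar piece of constant size and constant doubling dimension, glue the pieces along the grid, and argue each parameter separately. Planarity is immediate from the layout. Constant doubling dimension follows from working with a bounded-size metric building block embedded along grid edges, so that any ball looks like a bounded union of such blocks. The pathwidth $O(w)$ follows from the standard path decomposition of a $w \times w$ planar grid obtained by sweeping column by column; each bag contains only the vertices of one column of selection gadgets plus the consistency gadgets to the next column. For highway dimension, I would ensure that at every scale $r$ a small number of ``hubs'' (one per column, say) suffice to hit all shortest paths longer than $r$, yielding $h = O(w)$ or $O(w^2)$, either of which is fine for $\sqrt{k+h} = O(w)$.

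The main obstacle is to realise all four structural constraints \emph{at once}: planarity and constant doubling dimension force a metrically nearly-Euclidean layout in the plane, while small highway dimension requires sparse hub sets for long shortest paths, and small pathwidth constrains the global graph shape. The tension is between (i) making the consistency gadgets long and rich enough to faithfully simulate the inequality constraints of \gti at the right radius $\rho$, and (ii) keeping them short, uniformly structured, and hub-friendly enough to preserve $O(1)$ doubling dimension and $O(w)$ highway dimension. I expect the technically delicate step to be the highway-dimension analysis: one must exhibit, for every scale $r$, a sparse set of hubs covering all shortest paths of length between $r$ and $2r$, which typically forces an explicit hierarchical decomposition of the gadget graph and a careful choice of gadget lengths so that no shortest path can ``sneak'' through many cells at an inconvenient scale. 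Once the gadgets are pinned down, correctness of the reduction is a two-direction argument: a valid \gti tiling yields a \kc solution of cost $\leq \rho$ by placing centers at the encoded representative vertices, and conversely a solution of cost $\leq \rho$ must place one center in each cell gadget and the consistency gadgets force the picked pairs to satisfy the \gti constraints.
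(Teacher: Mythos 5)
Your high-level strategy matches the paper's: reduce from \gti, lay the cells out as a planar grid, build a selection gadget per cell, connect neighbours by paths, and bound $p$, $h$, $d$ simultaneously with $\kappa = \Theta(\sqrt{k})$. However, the specific gadget you describe has a self-contradiction that would make the parameter bounds fail. You ask each gadget to contain one ``representative vertex'' per admissible pair $(a,b)\in S_{i,j}$, while also being ``a small planar piece of constant size''. In \gti the set $S_{i,j}$ can contain up to $n^2$ pairs, so a gadget with one vertex per pair has size $\Theta(n^2)$, not $O(1)$. This propagates to your pathwidth sketch: a bag that holds ``one column of selection gadgets plus the consistency gadgets to the next column'' would then contain $\Theta(\kappa n^2)$ vertices, giving polynomial pathwidth rather than $O(\kappa)$. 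Your doubling-dimension argument (``any ball looks like a bounded union of such blocks'') likewise presupposes constant-size blocks, so it does not apply once the gadgets are polynomial-size.

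The missing structural idea is how to make a $\Theta(n^2)$-vertex gadget behave metrically like a single point while still encoding a choice among $n^2$ elements. The paper does this with a cycle $O_{i,j}$ of unit-length edges and length $16n^2+4$, with four ``port'' vertices $x^1_{i,j},\ldots,x^4_{i,j}$ each attached to the cycle by edges of length roughly $2n^2$ whose exact value encodes the coordinate $a$ or $b$ of each element, plus a central vertex $y_{i,j}$ joined by edges of length $2n^2+1$ to four diametrically opposite cycle vertices. A budget of $2n^2$ forces exactly four equidistant centers on the cycle (one per quarter-arc), and the only way these four can also cover all four ports is if they lie at the quadruple of vertices corresponding to a single element $s_\tau$; so the gadget still selects one pair, but the cycle — all edges unit length, all hanging edges of length $\Theta(n^2)$ — has bounded aspect ratio relative to its diameter, which is what gives $O(1)$ doubling dimension. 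For pathwidth, you then cannot put whole gadgets in a bag; instead, the path decomposition keeps only the $O(\kappa)$ port vertices of the current column and slides a constant-size window of three or so consecutive cycle vertices over one gadget at a time. The highway-dimension bound follows with a scale-dependent hub set ($X$-vertices at scales above the gadget diameter, periodic cycle vertices at intermediate scales, periodic path vertices below length $1$), which is roughly what you gesture at, but getting the hub count down to $O(1)$ per ball also depends on the ports being $\Theta(n^2)$ apart. Until the gadget is redesigned along these lines your correctness direction is fine but every one of the four parameter bounds is unproven.
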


A \emph{planar graph} can be drawn in the plane without crossing edges. Such 
graphs constitute a realistic model for road networks, since overpasses and 
tunnels are relatively rare. It is known~\cite{plesnik1980planar-k-center} that 
also for planar graphs no $(2-\eps)$-approximation can be computed in 
polynomial 
time, unless P=NP. On the positive side, \kc is 
FPT~\cite{demaine2005planar-center} on unweighted planar graphs for the 
combined 
parameter $k$ and the optimum solution cost $\rho$. However, typically if $k$ 
is 
small then~$\rho$ is large and vice versa, and thus the applications for this 
combined parameter are rather limited. If the parameter is only $k$, then an 
$n^{O(\sqrt{k})}$ time algorithm exists for planar 
graphs~\cite{marx2015optimal}. By a very recent result~\cite{planar-EPTAS} the 
\kc problem on planar graphs with positive edge lengths admits an 
\emph{efficient polynomial-time bicriteria approximation scheme}, which for any 
$\eps>0$ in $f(\eps)\polyn$ time computes a solution that uses at most 
$(1+\eps)k$ centers and approximates the optimum with at most $k$ centers 
within 
a factor of $1+\eps$. This algorithm implies an EPAS for parameter~$k$ on 
planar graphs with edge lengths, since setting 
$\eps=\min\{\eps',\frac{1}{2k}\}$ 
forces the algorithm to compute a $(1+\eps')$-approximation in 
$f(k,\eps')\polyn$ time using at most $(1+\eps)k\leq k+\frac{1}{2}$ centers, 
i.e., at most $k$ centers as $k$ is an integer. This observation is 
complemented 
by our hardness result showing that it is necessary to approximate the solution 
when parametrizing by $k$ in planar graphs with edge lengths.

\begin{dfn}\label{dfn:dd}
The \emph{doubling dimension} of a metric $(X,\dist)$ is the smallest 
$d\in\mathbb{R}$ such that for any $r>0$, every ball of radius $2r$ is 
contained 
in the union of at most~$2^d$ balls of radius~$r$. The doubling dimension of a 
graph is the doubling dimension of its shortest-path metric.
\end{dfn}

Since a transportation network is embedded on a large sphere (namely the 
Earth), 
a reasonable model is to assume that the shortest-path metric abides to the 
Euclidean $L_2$-norm. In cities, where blocks of buildings form a grid of 
streets, it is reasonable to assume that the distances are given by the 
Manhattan $L_1$-norm. Every metric for which the distance function is given by 
the $L_q$-norm in $D$-dimensional space~$\mathbb{R}^D$ has doubling dimension 
$O(D)$. Thus a road network, which is embedded into $\mathbb{R}^2$ can 
reasonably be assumed to have constant doubling dimension. It is 
known~\cite{marx2005efficient} that \kc is W[1]-hard for parameter $k$ in 
two-dimensional Manhattan metrics. Also, no polynomial time 
$(2-\eps)$-approximation algorithm exists for \kc in two-dimensional Manhattan 
metrics~\cite{feder1988metric-k-center}, and no 
$(1.822-\eps)$\hy{}approximation 
for two-dimensional Euclidean metrics~\cite{feder1988metric-k-center}. On the 
positive side, Agarwal and Procopiuc~\cite{agarwal2002clustering} showed that 
for any $L_q$ metric in $D$ dimensions, the \kc problem can be solved optimally 
in $n^{O(k^{1-1/D})}$ time, and an EPAS exists for the combined parameter 
$(\eps,k,D)$. We generalize the latter to any metric of doubling dimension $d$, 
as formalized by the following theorem.

\begin{thm}\label{thm:alg}
Given a metric of doubling dimension $d$ and $\eps>0$, a 
$(1+\eps)$-approximation for \mbox{\kc} can be computed in 
$(k^k/\eps^{O(kd)})\polyn$ time.
\end{thm}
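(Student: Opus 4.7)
The plan is to guess the optimum value $\rho^*$, reduce the search to a small candidate set $\mathcal{C}$ of size $k\cdot(1/\eps)^{O(d)}$ containing a $(1+\eps)$-approximate solution as a $k$-subset, and then brute-force over all such subsets. Since the optimum cost equals some pairwise distance $\dist(u,v)$, we iterate over the $O(n^2)$ candidate values of $\rho$ and output the best solution found; hereafter fix a guess $\rho$ and analyse correctness under the assumption $\rho=\rho^*$.

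First, I would run the Gonzalez furthest-point heuristic to obtain a set $N=\{u_1,\dots,u_k\}$ that, by the standard $2$-approximation analysis, covers every vertex within distance $2\rho^*$ whenever $\rho=\rho^*$. In particular each of the (unknown) optimum centers $c_1^*,\dots,c_k^*$ lies in some ball $B_{u_i}(2\rho^*)$. Next, for each $u_i$ I would greedily extract a maximal $\eps\rho$-separated subset $N_i\subseteq B_{u_i}(2\rho)$, so that every point of $B_{u_i}(2\rho)$ is at distance at most $\eps\rho$ from some point of $N_i$. A standard packing argument based on Definition~\ref{dfn:dd} (iterating the halving property $\log_2(4/\eps)$ times to cover $B_{u_i}(2\rho)$ by $(1/\eps)^{O(d)}$ balls of radius $\eps\rho/2$, then observing that each such ball contains at most one point of any $\eps\rho$-separated set) yields $|N_i|\leq (1/\eps)^{O(d)}$. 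Set $\mathcal{C}=\bigcup_{i} N_i$, so $|\mathcal{C}|\leq k\cdot(1/\eps)^{O(d)}$.

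The candidate set $\mathcal{C}$ contains a $(1+\eps)$-approximate solution: for each optimum center $c_j^*$ pick $p_j\in N_{i(j)}$ with $\dist(p_j,c_j^*)\leq\eps\rho^*$, guaranteed by the net property above. By the triangle inequality, every vertex at distance $\leq\rho^*$ from $c_j^*$ is at distance $\leq(1+\eps)\rho^*$ from $p_j$, so $\{p_1,\dots,p_k\}\subseteq\mathcal{C}$ has cost at most $(1+\eps)\rho^*$. Enumerating all $k$-subsets of $\mathcal{C}$ and evaluating the cost of each in $O(kn)$ time therefore finds a solution of cost $\leq(1+\eps)\rho^*$; the number of $k$-subsets is at most $|\mathcal{C}|^k\leq k^k/\eps^{O(kd)}$, which dominates the running time.

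The only real obstacle is the doubling-dimension packing bound used in Step~3; the rest combines Gonzalez's $2$-approximation, exhaustive search, and the triangle inequality. Since the packing bound is essentially Definition~\ref{dfn:dd} iterated, the argument is a fairly direct generalization of the Agarwal--Procopiuc scheme from $L_q$ metrics to arbitrary metrics of doubling dimension $d$, with the $L_q$-specific grid replaced by a greedy net inside each Gonzalez cluster.
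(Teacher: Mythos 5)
Your proof is correct and uses the same essential idea as the paper: guess the optimal radius $\rho$, reduce to a candidate set of size $k/\eps^{O(d)}$ via the doubling dimension, and brute-force over all $k$-subsets. The one place you diverge is the construction of the candidate set: you first run Gonzalez's furthest-point heuristic to locate $k$ balls of radius $2\rho$ covering $V$ and then build a local $\eps\rho$-net inside each, whereas the paper simply computes a single global $\tfrac{\eps\rho}{2}$-net of $V$ and bounds its size by observing that the $k$ \emph{optimum} balls (which exist but need not be computed) each contain only $1/\eps^{O(d)}$ net points by \cref{lem:aspect}. Both are correct; the paper's existential argument makes the Gonzalez step unnecessary and gives a marginally cleaner exposition, while your version localizes the candidates around explicit approximate centers and evaluates each $k$-subset directly against all of $V$ rather than solving the sub-instance $(Y,\dist)$ and paying an extra additive $\tfrac{\eps\rho}{2}$ at the end.
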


\cref{thm:hardness} complements this result by showing that it is necessary 
to 
approximate the cost of the solution if parametrizing by $k$ and $d$.

\begin{dfn}\label{dfn:hd}
The \emph{highway dimension} of a graph $G$ is the smallest $h\in\mathbb{N}$ 
such that, for some universal constant \mbox{$c\geq 4$}, for every $r\in 
\mathbb{R}^+$ and every ball $B_{cr}(v)$ of radius~$cr$, there is a set 
$H\subseteq B_{cr}(v)$ of \emph{hubs} such that $|H|\leq h$ and every shortest 
path of length more than~$r$ lying in $B_{cr}(v)$ contains a hub of $H$.
\end{dfn}

The highway dimension was introduced by Abraham et 
al.~\cite{abraham2010highway} 
as a formalization of the empirical observation by Bast et 
al.~\cite{bast2009ultrafast,bast2007transit} that in a road network, starting 
from any point $A$ and travelling to a sufficiently far point $B$ along the 
quickest route, one is bound to pass through some member of a sparse set of 
``access points'', i.e., the hubs. In contrast to planar and low doubling 
graphs, the highway dimension has the potential to model not only road networks 
but also more general transportation networks such as those given by 
air-traffic or public transportation. This is because in such networks longer 
connections tend to be serviced through larger and sparser stations, which act 
as hubs. Abraham et al.~\cite{abraham2010highway} were able to prove that 
certain shortest-path heuristics are provably faster in low highway dimension 
graphs than in general graphs. They specifically chose the constant $c=4$ in 
their original definition, but later work by Feldmann et 
al.~\cite{hd-embedding} showed that when choosing any constant $c>4$ in the 
definition, the structure of the resulting graphs can be exploited to obtain 
quasi-polynomial time approximation schemes for problems such as 
\pname{Travelling Salesman} or \pname{Facility Location}. Note that increasing 
the constant $c$ in \cref{dfn:hd} restricts the class of graphs further. 
Moreover, as shown by \citet[Section~9]{hd-embedding}, the highway dimension of 
a graph according to \cref{dfn:hd} can grow arbitrarily large by just a 
small change in the constant~$c$: for any $c$ there is a graph of highway 
dimension $1$ when using $c$ in \cref{dfn:hd}, which however has highway 
dimension $\Omega(n)$ for any constant larger than $c$.\footnote{We remark that 
these graphs have unbounded doubling dimension, and that an upper bound of 
$O(hc^d)$ on the highway dimension of any graph using constant $c$ in 
\cref{dfn:hd} can be shown, if the doubling dimension is $d$ and $h$ is the 
highway dimension using constant $4$.}
Other definitions of the highway dimension exist as 
well~\cite{abraham2010highway2,abraham2011vc, abraham2010highway} (see 
\citet[Section~9]{hd-embedding} and \citet{blum2019hierarchy} for detailed 
discussions).

Later, Becker et al.~\cite{hd-Becker2017} used the framework introduced by 
Feldmann et al.~\cite{hd-embedding} to show that whenever $c>4$ there is an 
EPAS 
for \kc parameterized by~$\eps$, $k$, and~$h$. Note that the highway dimension 
is always upper bounded by the vertex-cover number, as every edge of any 
non-trivial path is incident to a vertex cover. Hence the aforementioned 
W[1]-hardness result by Katsikarelis et al.~\cite{katsikarelis2017structural} 
for the combined parameter $k$ and the vertex-cover number proves that it is 
necessary to approximate the optimum when using $k$ and $h$ as the combined 
parameter. When parametrizing only by the highway dimension but not~$k$, it is 
not even known if a \emph{parameterized approximation scheme (PAS)} exists, 
i.e., an $f(\eps,h)\cdot n^{g(\eps)}$ time \mbox{$(1+\eps)$-}approximation 
algorithm for some computable functions $f,g$. However, under the 
\emph{Exponential Time Hypothesis (ETH)}~\cite{pc-book}, 
by~\cite{DBLP:conf/icalp/Feldmann15} there is no algorithm with doubly 
exponential $2^{2^{o(\sqrt{h})}}\polyn$ runtime computing a 
\mbox{$(2-\eps)$}\hy{}approximation for any $\eps>0$. The same 
paper~\cite{DBLP:conf/icalp/Feldmann15} also presents a $3/2$\hy{}approximation 
for \kc with runtime $2^{O(kh\log h)}\polyn$ for a more general definition of 
the highway dimension than the one given in \cref{dfn:hd} (based on 
so-called 
\emph{shortest path covers}). In contrast to the result of Becker et 
al.~\cite{hd-Becker2017}, it is not known whether a PAS exists when combining 
this more general definition of $h$ with $k$ as a parameter. 
\cref{thm:hardness} complements these results by showing that even on planar 
graphs of constant doubling dimension, for the combined parameter~$(k,h)$ no 
fixed-parameter algorithm exists, unless FPT=W[1]. Therefore approximating the 
optimum is necessary, regardless of whether~$h$ is according to 
\cref{dfn:hd} 
or the more general one from~\cite{DBLP:conf/icalp/Feldmann15}, and regardless 
of how restrictive \cref{dfn:hd} is made by increasing the constant~$c$.

\begin{dfn}\label{dfn:pw}
A \emph{path decomposition} of a graph $G=(V,E)$ is a path $P$ each of whose 
nodes~$v$ is labelled by a bag $K_v \subseteq V$ of vertices of $G$, and has 
the following properties:
 \begin{enumerate}[(a)]
\item\label{item:tw-union} $\bigcup_{v \in V(P)} K_v = V$, 
\item\label{item:tw-edges} for every edge $\{u,w\}\in E$ there is a
  node $v \in V(P)$ such that~$K_v$ contains both $u$ and~$w$,
\item\label{item:tw-vertices} for every $v\in V$ the set $\{u \in V(P) \mid v 
\in K_u\}$ induces a connected subpath of $P$.
 \end{enumerate}
The \emph{width} of the path decomposition is $\max\{|K_v|-1\mid v \in V(P)\}$. 
The \emph{pathwidth} $p$ of a graph~$G$ is the minimum width 
among all path decompositions for~$G$.
\end{dfn}

The pathwidth of a graph is always at least as large as its treewidth (for 
which 
the path $P$ in the above definition is replaced by a tree). Thus, as mentioned 
above, arguably, bounded pathwidth graphs are not a good model for 
transportation networks. Also it is already known that \kc is W[1]-hard for 
this 
parameter, even when combining it with $k$~\cite{katsikarelis2017structural}. 
We 
include this well-studied parameter here nonetheless, since the reduction of 
our 
hardness result in \cref{thm:hardness} implies that \kc is W[1]\hy{}hard 
even 
for planar graphs with edge lengths when \emph{combining} any of the parameters 
$k$, $h$, $d$, and $p$. As noted by \citet{hd-embedding} and 
\citet{blum2019hierarchy}, these parameters are not bounded in terms of each 
other, i.e., they are incomparable. Furthermore, the doubling dimension is in 
fact bounded by a constant in \cref{thm:hardness}. Hence, even if one were 
to 
combine all the models presented above and assume that a transportation network 
is planar, is embeddable into some metric of constant doubling dimension, has 
bounded highway dimension, and even has bounded pathwidth, the \kc problem 
cannot be solved efficiently, unless FPT=W[1]. Thus it seems unavoidable to 
approximate the problem in transportation networks, when developing fast 
algorithms.

\subsection{Related work}

The above mentioned efficient bicriteria approximation 
scheme~\cite{planar-EPTAS} improves on a previous (non-efficient) bicriteria 
approximation scheme~\cite{eisenstat2014planar}, which for any $\eps>0$ and 
planar input graph with edge lengths computes a $(1+\eps)$-approximation with 
at 
most $(1+\eps)k$ centers in time $n^{f(\eps)}$ for some function $f$ (note that 
in contrast to above, such an algorithm does not imply a PAS for parameter 
$k$). 
The paper by Demaine et al.~\cite{demaine2005planar-center} on the \kc problem 
in unweighted planar graphs also considers the so-called class of \emph{map 
graphs}, which is a superclass of planar graphs that is not minor-closed. They 
show that the problem is FPT on unweighted map graphs for the combined 
parameter 
$(k,\rho)$. Also for the tree-depth, \kc is 
FPT~\cite{katsikarelis2017structural}. Another parameter related to 
transportation networks is the \emph{skeleton dimension}, for which it was 
recently shown~\cite{blum2019hierarchy} that, under ETH, no 
$2^{2^{o(\sqrt{s})}}\polyn$ time algorithm can compute a 
$(2-\eps)$-approximation for any $\eps>0$, if the skeleton dimension is $s$. It 
is not known whether this parameter yields any approximation schemes when 
combined with for instance $k$, as is the case for the highway dimension.

A closely related problem to \kc is the \pname{$\rho$-Dominating Set} problem, 
in which $\rho$ is given and the number $k$ of centers covering a given graph 
with $k$ balls of radius $\rho$ needs to be minimized. As this generalizes the 
\pname{Dominating Set} problem, no $(\ln(n)-\eps)$\hy{}approximation is 
possible 
in polynomial time~\cite{dinur2014setcover}, unless P=NP, and computing an 
$f(k)$-approximation is W[1]-hard~\cite{dom-set} when parametrizing by~$k$, for 
any computable function~$f$.

\section{The reduction}
\label{sec:reduction}

In this section we give a reduction from the \pname{Grid Tiling with 
Inequality} 
(\gti) problem, which was introduced by \citet{marx2014limited} and is defined 
as follows. Given $\kappa^2$ non-empty sets $S_{i,j}\subseteq [n]^2$ of pairs 
of integers,\footnote{For any positive integer $q$, throughout this article 
$[q]$ means $\{1,\ldots,q\}$.} where $i,j\in[\kappa]$, the task is to select 
one pair $s_{i,j}\in S_{i,j}$ for each set such that
\begin{itemize}
 \item if $s_{i,j}=(a,b)$ and $s_{i+1,j}=(a',b')$ for $i\leq \kappa-1$ then 
$a\leq a'$, and
 \item if $s_{i,j}=(a,b)$ and $s_{i,j+1}=(a',b')$ for $j\leq \kappa-1$ then 
$b\leq b'$.
\end{itemize}
The \gti problem is W[1]-hard~\cite{pc-book} for parameter $\kappa$, and 
moreover, under ETH has no $f(\kappa)\cdot n^{o(\kappa)}$ time algorithm for 
any 
computable function $f$.

\subsection{Construction}

Given an instance $\mc{I}$ of \gti with $\kappa^2$ sets, we construct the 
following graph $G_\mc{I}$. First, for each set $S_{i,j}$, where $1\leq i,j\leq 
\kappa$, we fix an arbitrary order on its elements, so that 
$S_{i,j}=\{s_1,\ldots,s_\sigma\}$, where $\sigma\leq n^2$. We then construct a 
gadget $G_{i,j}$ for~$S_{i,j}$, which contains a cycle~$O_{i,j}$ of length 
$16n^2+4$ in which each edge has length~$1$ (see \cref{fig:red}(a)). 
Additionally we introduce five vertices $x^1_{i,j}$, $x^2_{i,j}$, $x^3_{i,j}$, 
$x^4_{i,j}$, and~$y_{i,j}$. If $O_{i,j}=(v_1,v_2,\ldots,v_{16n^2+4},v_1)$ then 
we connect these five vertices to the cycle as follows. The vertex $y_{i,j}$ is 
adjacent to the four vertices $v_{1}$, $v_{4n^2+2}$, $v_{8n^2+3}$, 
and~$v_{12n^2+4}$, with edges of length $2n^2+1$ each. For every 
$\tau\in[\sigma]$ 
and $s_\tau\in S_{i,j}$, if $s_\tau=(a,b)$ we add the four edges
\begin{itemize}
 \item $x^1_{i,j} v_{\tau}$ of length $\ell'_a=2n^2-\frac{a}{n+1}$,
 \item $x^2_{i,j} v_{\tau+4n^2+1}$ of length $\ell_b=2n^2+\frac{b}{n+1}-1$,
 \item $x^3_{i,j} v_{\tau+8n^2+2}$ of length $\ell_a=2n^2+\frac{a}{n+1}-1$, and
 \item $x^4_{i,j} v_{\tau+12n^2+3}$ of length $\ell'_b=2n^2-\frac{b}{n+1}$.
\end{itemize}
We say that the element $s_\tau$ corresponds to the four vertices $v_{\tau}$, 
$v_{\tau+4n^2+1}$, $v_{\tau+8n^2+2}$, and~$v_{\tau+12n^2+3}$. Note that $s_1$ 
(which always exists) corresponds to the four vertices adjacent to $y_{i,j}$. 
Note also that $2n^2-1 < \ell_a,\ell'_a,\ell_b,\ell'_b<2n^2$, since $a,b\in 
[n]$.

The gadgets $G_{i,j}$ are now connected to each other in a grid-like fashion 
(see \cref{fig:red}(b)). That is, for $j\leq\kappa-1$ we add a path 
$P_{i,j}$ 
between $x^2_{i,j}$ and $x^4_{i,j+1}$ with $n+2$ edges of length 
$\frac{1}{n+2}$ 
each. Analogously, for \mbox{$i\leq\kappa-1$} we introduce a path $P'_{i,j}$ 
between $x^3_{i,j}$ and $x^1_{i+1,j}$ that has $n+2$ edges, each of 
length~$\frac{1}{n+2}$. Note that these paths all have length~$1$.

\begin{figure*}[t!]
    \centering
    \begin{subfigure}[t]{0.5\textwidth}
    \centering
 \includegraphics[width=\linewidth]{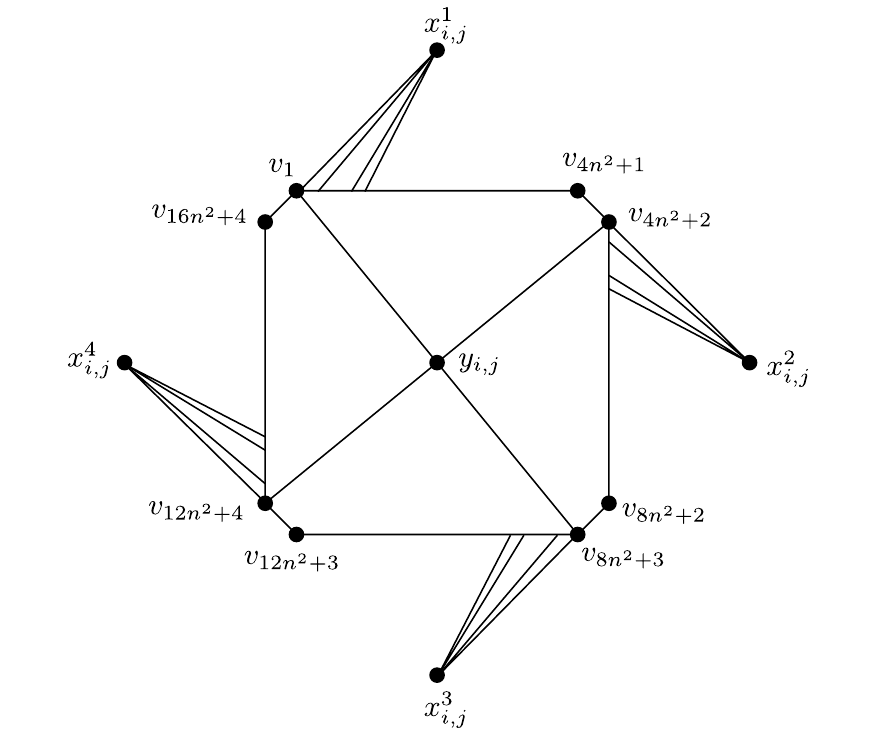}
        \caption{The gadget $G_{i,j}$ in the reduction.}
    \end{subfigure}%
    ~ 
    \begin{subfigure}[t]{0.5\textwidth}
    \centering
 \includegraphics[width=\linewidth]{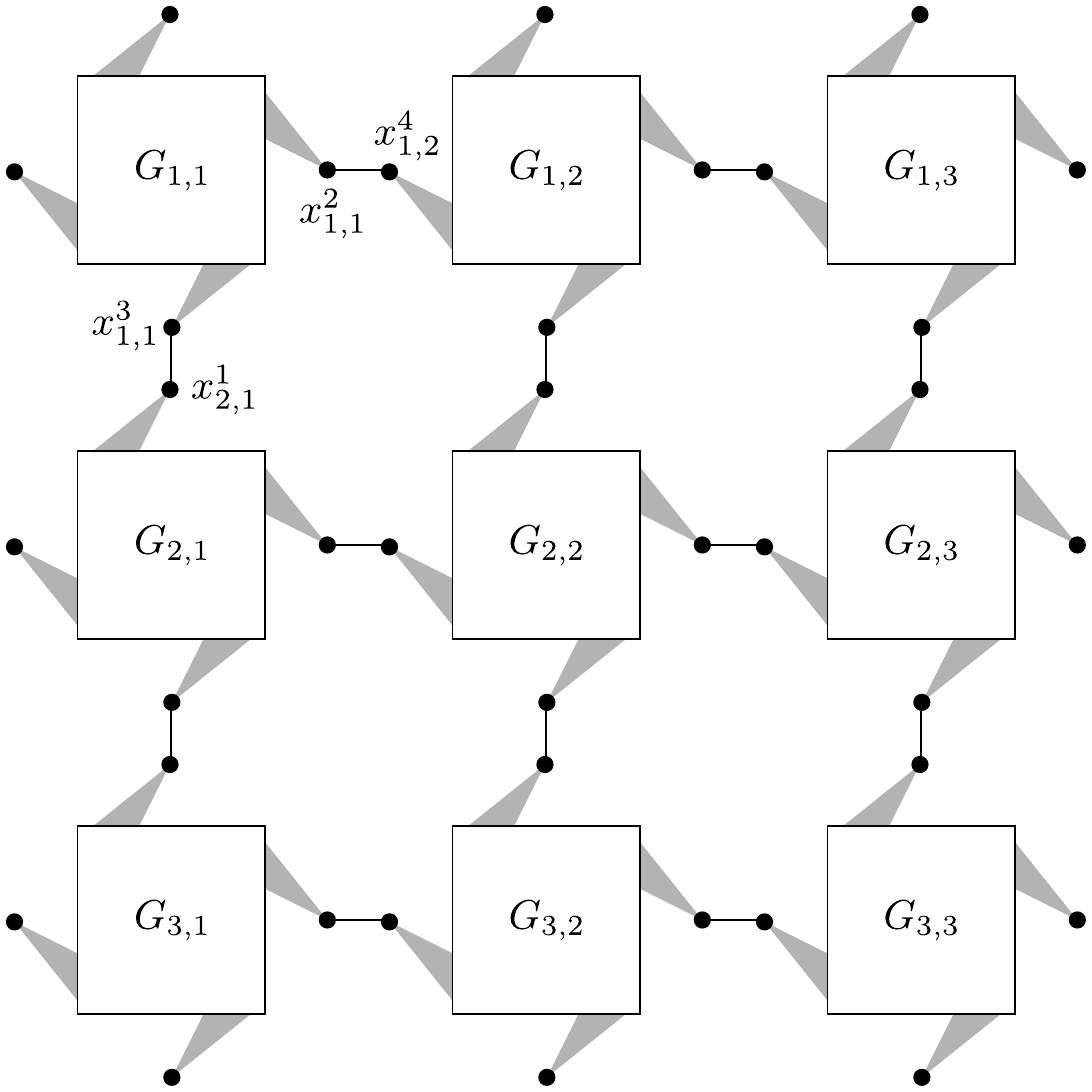}
        \caption{Connecting the gadgets in a grid-like fashion.}
    \end{subfigure}
    \caption{The structure of the graph $G_\mc{I}$ constructed for the 
reduction.}\label{fig:red}
\end{figure*}

The resulting graph $G_\mc{I}$ forms an instance of \kc with $k=5\kappa^2$. We 
claim that the instance $\mc{I}$ of \gti has a solution if and only if the 
optimum solution to \kc on $G_\mc{I}$ has cost at most $2n^2$. We note at this 
point that the reduction would still work when removing the vertices $y_{i,j}$ 
and decreasing $k$ to~$4\kappa^2$. However their existence will greatly 
simplify analysing the doubling dimension of $G_\mc{I}$ in \cref{sec:prop}.

\subsection{A solution to the \gti instance implies a \kc instance with cost 
$2n^2$}

Recall that we fixed an order of each set $S_{i,j}$, so that each element 
$s_\tau\in S_{i,j}$ corresponds to four equidistant vertices on cycle $O_{i,j}$ 
with distance $4n^2+1$ between consecutive such vertices on the cycle. If 
$s_\tau\in S_{i,j}$ is in the solution to the \gti instance $\mc{I}$, let 
$C_{i,j}=\{v_{\tau}, v_{\tau+4n^2+1}, v_{\tau+8n^2+2}, v_{\tau+12n^2+3}, 
y_{i,j}\}$ contain the vertices of $O_{i,j}$ corresponding to $s_\tau$ in 
addition to $y_{i,j}$. The solution to the \kc instance $G_\mc{I}$ is given by 
the union $\bigcup_{i,j\in[\kappa]} C_{i,j}$, which consists of exactly 
$5\kappa^2$ centers in total.

Let us denote the set containing the four vertices of $C_{i,j}\cap V(O_{i,j})$ 
by $C_{i,j}^O$, and note that each of these four vertices covers $4n^2+1$ 
vertices of $O_{i,j}$ with balls of radius $2n^2$, as each edge of $O_{i,j}$ 
has 
length~$1$. Since the distance between any pair of centers in $C_{i,j}^O$ is at 
least $4n^2+1$, these four sets of covered vertices are pairwise disjoint. Thus 
the total number of vertices covered by $C_{i,j}^O$ on $O_{i,j}$ is $16n^2+4$, 
i.e., all vertices of the cycle $O_{i,j}$ are covered. Recall that the lengths 
of the edges between the vertices $x^1_{i,j}$, $x^2_{i,j}$, $x^3_{i,j}$, and 
$x^4_{i,j}$ and the cycle $O_{i,j}$ are $\ell_a,\ell'_a,\ell_b,\ell'_b<2n^2$. 
Hence the centers in $C_{i,j}^O$ also cover $x^1_{i,j}$, $x^2_{i,j}$, 
$x^3_{i,j}$, and $x^4_{i,j}$ by balls of radius $2n^2$.

Now consider a path connecting two neighbouring gadgets, e.g., $P_{i,j}$ 
connecting $x^2_{i,j}$ and $x^4_{i,j+1}$. The center sets $C_{i,j}^O$ and 
$C_{i,j+1}^O$ contain vertices corresponding to the respective elements $s\in 
S_{i,j}$ and $s'\in S_{i,j+1}$ of the solution to the \gti instance. This means 
that if $s=(a,b)$ and $s'=(a',b')$ then $b\leq b'$. Thus the closest centers of 
$C_{i,j}^O$ and $C_{i,j+1}^O$ are at distance $\ell_b+1+\ell'_{b'}$ from each 
other, as $P_{i,j}$ has length $1$. From $b\leq b'$ we get
\[
 \ell_b+1+\ell'_{b'}= 2n^2+\frac{b}{n+1}-1+1+2n^2-\frac{b'}{n+1}\leq 4n^2.
\]
Therefore all vertices of $P_{i,j}$ are covered by the balls of radius $2n^2$ 
around the two closest centers of $C_{i,j}^O$ and $C_{i,j+1}^O$. Analogously, 
we 
can also conclude that any path $P'_{i,j}$ connecting some vertices $x^1_{i,j}$ 
and $x^3_{i+1,j}$ is covered, using the fact that if $(a,b)\in S_{i,j}$ and 
$(a',b')\in S_{i+1,j}$ are in the solution to the \gti instance then~$a\leq a'$.

Finally, the remaining center vertices in $\bigcup_{i,j\in[\kappa]} 
C_{i,j}\setminus C_{i,j}^O$ cover the additional vertex~$y_{i,j}$ in each 
gadget~$G_{i,j}$.

\subsection{A \kc instance with cost $2n^2$ implies a solution to the \gti 
instance}

Each vertex $y_{i,j}$ must be contained in any solution of cost at most $2n^2$, 
since the distance from $y_{i,j}$ to any other vertex is more than $2n^2$. This 
already uses $\kappa^2$ of the available $5\kappa^2$ centers.

We now prove that in any solution to the \kc instance $G_\mc{I}$ of cost at 
most~$2n^2$, each cycle $O_{i,j}$ must contain exactly four centers. Recall 
that 
$\ell_a,\ell'_a,\ell_b,\ell'_b>2n^2-1$, that $y_{i,j}$ is incident to four 
edges 
of length $2n^2+1$ each, and that each edge of $O_{i,j}$ has length $1$. Now 
consider the vertices $v_{4n^2+1}$, $v_{8n^2+2}$, $v_{12n^2+3}$, 
and~$v_{16n^2+4}$, each of which is not connected by an edge to any vertex 
$x^q_{i,j}$, where $q\in[4]$, nor to~$y_{i,j}$. Thus each of these four 
vertices 
must be covered by centers on the cycle~$O_{i,j}$ if the radius of each ball is 
at most $2n^2$. Furthermore, the distance between each pair of these four 
vertices is at least $4n^2+1$, which means that any solution of cost at most 
$2n^2$ needs at least four centers on $O_{i,j}$ to cover these four vertices. 
Since there are $\kappa^2$ cycles and only $4\kappa^2$ remaining available 
centers, we proved that each cycle~$O_{i,j}$ contains exactly four centers, and 
apart from the $y_{i,j}$ vertices no other centers exist in the graph 
$G_\mc{I}$. 

Let $C_{i,j}^O$ be the set of four centers contained in $O_{i,j}$. As each 
center of $C_{i,j}^O$ covers at most $4n^2+1$ vertices of $O_{i,j}$ by balls of 
radius at most $2n^2$, to cover all $16n^2+4$ vertices of $O_{i,j}$ these four 
centers must be equidistant with distance exactly $4n^2+1$ between consecutive 
centers on $O_{i,j}$. Furthermore, since $\ell_a,\ell'_a,\ell_b,\ell'_b>2n^2-1$ 
and each edge of $O_{i,j}$ has length~$1$, to cover~$x^q_{i,j}$ for any 
$q\in[4]$ some center of $C_{i,j}^O$ must lie on a vertex of $O_{i,j}$ adjacent 
to~$x^q_{i,j}$. This means that the four centers of $C_{i,j}^O$ are exactly 
those vertices $v_{\tau+(q-1)(4n^2+1)}$ corresponding to element $s_\tau$ of 
$S_{i,j}$.

It remains to show that the elements corresponding to the centers in 
$\bigcup_{i,j\in[\kappa]} C^O_{i,j}$ form a solution to the \gti instance 
$\mc{I}$. For this, consider two neighbouring gadgets $G_{i,j}$ and 
$G_{i,j+1}$, 
and let $(a,b)\in S_{i,j}$ and $(a',b')\in S_{i,j+1}$ be the respective 
elements 
corresponding to the center sets $C_{i,j}^O$ and $C_{i,j+1}^O$. Note that for 
any $\hat b\in[n]$ we have $\ell_b\leq \ell_{\hat b}+1$ and $\ell'_{b'}\leq 
\ell'_{\hat b}+1$. Since every edge of the cycles $O_{i,j}$ and $O_{i,j+1}$ has 
length~$1$, this means that the distance from the closest centers $v\in 
C_{i,j}^O$ and $v'\in C_{i,j+1}^O$ to $x^2_{i,j}$ and $x^4_{i,j+1}$, 
respectively, is determined by the edges of length $\ell_b$ and $\ell'_{b'}$ 
incident to $v$ and $v'$, respectively. In particular, the distance between $v$ 
and~$v'$ is $\ell_b+1+\ell'_{b'}$, as the path $P_{i,j}$ connecting $x^2_{i,j}$ 
and $x^4_{i,j+1}$ has length $1$. Assume now that $b>b'$, which means that 
$b\geq b'+1$ since $b$ and $b'$ are integer. Hence this distance is
\[
\ell_b+1+\ell'_{b'}= 2n^2+\frac{b}{n+1}-1+1+2n^2-\frac{b'}{n+1} \geq 
4n^2+\frac{1}{n+1}.
\]
As the centers $v$ and $v'$ only cover vertices at distance at most $2n^2$ 
each, 
while the edges of the path $P_{i,j}$ have length 
$\frac{1}{n+2}<\frac{1}{n+1}$, 
there must be some vertex of $P_{i,j}$ that is not covered by the center set. 
However this contradicts the fact that the centers form a feasible solution 
with 
cost at most $2n^2$, and so $b\leq b'$.

An analogous argument can be made for neighbouring gadgets $G_{i,j}$ and 
$G_{i+1,j}$, so that $a\leq a'$ for the elements $(a,b)\in S_{i,j}$ and 
$(a',b')\in S_{i+1,j}$ corresponding to the centers in $C_{i,j}^O$ 
and~$C_{i+1,j}^O$, respectively. Thus a solution to $G_\mc{I}$ of cost at 
most~$2n^2$ implies a solution to $\mc{I}$.

\section{Properties of the constructed graph}
\label{sec:prop}

The reduction of \cref{sec:reduction} proves that the \kc problem is 
W[1]-hard for parameter $k$, since the reduction can be done in polynomial time 
and $k$ is a function of $\kappa$. Since this function is quadratic, we can 
also 
conclude that, under ETH, there is no $f(k)\cdot n^{o(\sqrt{k})}$ time 
algorithm 
for \kc. We will now show that the graph constructed in the reduction has 
various additional properties from which we will be able to conclude 
\cref{thm:hardness}. First off, it is easy to see that any constructed 
graph~$G_\mc{I}$ for an instance $\mc{I}$ of \gti is planar 
(cf.~\cref{fig:red}). We go on to prove that $G_\mc{I}$ has constant 
doubling dimension.

\begin{lem}\label{lem:dd}
The graph $G_\mc{I}$ has doubling dimension at most \mbox{$\log_2(324)\approx 
8.34$} for $n\geq 2$.
\end{lem}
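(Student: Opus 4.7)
The plan is to verify the doubling condition in \cref{dfn:dd} with $2^d \le 324$, i.e., to show that every ball $B_v(2r)$ in $G_\mc{I}$ can be covered by at most $324$ balls of radius $r$. I would proceed by a case analysis on $r$ relative to the three natural scales of the graph: the short-edge scale $\tfrac{1}{n+2}$ used inside each connecting path $P_{i,j}, P'_{i,j}$; the unit scale $1$ (the total length of such a path and the length of each cycle edge); and the long-edge scale $\approx 2n^2$ (the edges joining each $x^q_{i,j}$ or $y_{i,j}$ to the cycle $O_{i,j}$).

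Four structural observations about $G_\mc{I}$ drive the constant bound. (a)~Each cycle $O_{i,j}$ is $1$-dimensional with unit edges, so its intersection with $B_v(2r)$ is a union of arcs of total length at most $4r$, coverable by an absolute constant number of balls of radius $r$. (b)~Although $x^q_{i,j}$ has degree up to $\sigma\le n^2$ on the cycle, its cycle-neighbors $v_{\tau + (q-1)(4n^2+1)}$ for $\tau\in[\sigma]$ form a single contiguous arc, and all the incident edge lengths lie in the narrow window $(2n^2-1, 2n^2)$; hence once one pays the ``admission cost'' of $\approx 2n^2$ to enter the cycle from $x^q_{i,j}$, one lands in a short arc handled as in~(a). (c)~Each $x$-vertex has exactly one external path neighbor (its twin across the unique length-$1$ path between adjacent gadgets), so the subgraph of $x$-vertices and connecting paths decomposes as a disjoint union of length-$1$ paths. (d)~Since $2n^2-1 > n^2$ for $n\ge 2$, entering the cycle of a gadget from any adjacent $x$-vertex costs more than $n^2$, and traversing between cycles of distinct gadgets costs at least $\approx 4n^2$; hence, for any $r$ and any $v$, the ball $B_v(2r)$ meets the cycles of only a bounded constant number of gadgets.

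Given these observations, I would cover $B_v(2r)$ explicitly as follows: for each of the constantly many cycles $O_{i,j}$ met by $B_v(2r)$, cover $B_v(2r)\cap O_{i,j}$ by a constant number of balls of radius $r$ using~(a); for each path $P_{i,j}$ or $P'_{i,j}$ met by $B_v(2r)$ use a single ball centered on the path, since each such path has diameter $1$; and for each $x$- or $y$-vertex in $B_v(2r)$ use one additional ball centered at that junction. The key arithmetic then goes through uniformly in each range of $r$: for small $r$ the ball is essentially $1$-dimensional and only constantly many sub-balls are needed; for intermediate $r$ the count of junctions and paths is bounded by~(c) and~(d); and for large $r$ the cycle contribution is bounded by~(a) and~(b). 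The final count falls well below $324$.

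The main technical obstacle is the intermediate regime $r\in[1,2n^2]$, where a single ball $B_v(2r)$ can simultaneously contain whole paths, several $x$-vertices, and growing clusters of cycle vertices reached via the long edges. Observation~(b) is what prevents the count from exploding here: all cycle-neighbors of a single $x^q_{i,j}$ sit in a contiguous arc of length at most $\sigma\le n^2$, which is coverable by a single sub-ball once $r\ge n^2/2$, while a small number of sub-balls suffices otherwise. Combined with~(d), which caps the number of gadgets simultaneously involved, the cover can be completed with the stated constant number of balls.
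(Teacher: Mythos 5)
Your observation~(d) is the crux of the argument, but as stated it is false for large~$r$, and this is exactly where the hard work lies. You claim that ``for any $r$ and any $v$, the ball $B_v(2r)$ meets the cycles of only a bounded constant number of gadgets.'' This is true only when $r = O(n^2)$. For $r \gg n^2$ a ball of radius $2r$ can contain on the order of $(r/n^2)^2$ gadgets of the $\kappa\times\kappa$ grid (the inter-gadget distances are roughly $8n^2$, and $\kappa$ is unbounded in terms of $n$). Your cover---one or a few sub-balls per gadget cycle, per path, and per junction vertex---then needs an unbounded number of balls, because it treats the $\kappa\times\kappa$ grid as a collection of discrete pieces rather than exploiting the fact that a \emph{radius-$r$} ball also sweeps up many gadgets at once.

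The missing idea is a self-similarity / rescaling argument across grid scales: for $r\gg n^2$ one must show that a ``diamond'' of $\Theta((r/n^2)^2)$ gadgets can be covered by $O(1)$ diamonds of $\Theta((r/(2n^2))^2)$ gadgets, which amounts to saying that the shortest-path metric restricted to the $y_{i,j}$'s is approximately an $L_1$ metric on $[\kappa]^2$ and hence has constant doubling dimension. The paper makes this precise by introducing the sets $V_{i,j}(a)$ (gadgets indexed by $\{(i',j') : |i-i'|+|j-j'|\le a\}$, plus the connecting paths), proving a two-sided bound on their in-radius and circum-radius via \cref{clm:dists}, and then covering $V_{i,j}(a)$ by $\lceil(2a+1)/(2a'+1)\rceil^2$ copies of $V_{i',j'}(a')$ at the smaller scale~$a'$. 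Without some version of this scale-bridging step, the case $r\ge Cn^2$ is not handled, and the constant $324$ cannot be certified. Your observations~(a)--(c) and your treatment of the small and intermediate regimes $r\lesssim n^2$ are fine and in the same spirit as the paper's (and indeed the paper's $324$ comes from the intermediate case $2n^2+1\le r<12n^2+5$, via $4$ balls per gadget times $(2\cdot 4+1)^2=81$ gadgets), but the large-$r$ regime requires a genuinely different argument than the one you outline.
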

\begin{proof}
planar. 
To bound the doubling dimension of the graph $G_\mc{I}$, consider the 
shortest-path metric on the vertex set $Y=\{y_{i,j}\in V(G_\mc{I})\mid 
i,j\in[\kappa]\}$ given by the distances between these vertices in~$G_\mc{I}$. 
On an intuitive level, as these vertices are arranged in a grid-like fashion, 
the shortest-path metric on $Y$ approximates the $L_1$\hy{}metric. We consider 
a set of index pairs, for which the corresponding vertices in $Y$ roughly 
resemble a ball in the shortest-path metric on $Y$. That is, for any 
$a\in\mathbb{N}_0$ consider the set of index pairs 
$A_{i,j}(a)=\{(i',j')\in[\kappa]^2\mid |i-i'|+|j-j'|\leq a\}$, and let 
$V_{i,j}(a)\subseteq V(G_\mc{I})$ contain all vertices of gadgets $G_{i',j'}$ 
such that $(i',j')\in A_{i,j}(a)$ in addition to the vertices of paths of 
length 
$1$ connecting these gadgets to each other and to any adjacent gadgets 
$G_{i'',j''}$ such that $(i'',j'')\notin A_{i,j}(a)$. We call the vertices 
$x^q_{i'',j''}\in V_{i,j}(a)$ such that $(i'',j'')\notin A_{i,j}(a)$, i.e., the 
endpoints of the latter paths of length $1$, the \emph{boundary vertices of 
$V_{i,j}(a)$}. We consider $y_{i,j}$ as the center of $V_{i,j}(a)$. We would 
like to determine the smallest radius of a ball around $y_{i,j}$ that contains 
all of $V_{i,j}(a)$, and the largest radius of a ball around $y_{i,j}$ that is 
entirely contained in $V_{i,j}(a)$. For this we need the following claim, which 
we will also reuse later.

\begin{claim}\label{clm:dists}
For any gadget $G_{i,j}$ and $q,q'\in[4]$ with $q\neq q'$, the distance between 
$x^q_{i,j}$ and~$x^{q'}_{i,j}$ in $G_\mc{I}$ lies between $7n^2-1$ and $8n^2+2$.
\end{claim}
\begin{proof}
The distance between $x^q_{i,j}$ and $x^{q'}_{i,j}$ is less than 
$2(2n^2+2n^2+1)=8n^2+2$, via the path passing through $y_{i,j}$ and the two 
vertices of $O_{i,j}$ adjacent to $y_{i,j}$, $x^q_{i,j}$, and $x^{q'}_{i,j}$. 
Note that the shortest path between $x^q_{i,j}$ and $x^{q'}_{i,j}$ inside the 
gadget $G_{i,j}$ does not necessarily pass through $y_{i,j}$, but may pass 
along the cycle $O_{i,j}$ instead. This is because the set $S_{i,j}$ of the 
\gti instance may contain up to $n^2$ elements, which would imply a direct edge 
from $x^{q}_{i,j}$ to $v_{n^2+(q-1)(4n^2+1)}$ on $O_{i,j}$. Thus we can give a 
lower bound of $2(2n^2-1)+3n^2+1=7n^2-1$ for the distance between~$x^{q}_{i,j}$ 
and $x^{q'}_{i,j}$ inside of $G_{i,j}$. This is also the shortest path between 
these vertices in~$G_\mc{I}$, since any other path needs to pass through at 
least three gadgets.
\cqed
\end{proof}

We define the \emph{circumradius} of $V_{i,j}(a)$ as the maximum distance 
from $y_{i,j}$ to any vertex inside of~$V_{i,j}(a)$, while the \emph{inradius} 
of $V_{i,j}(a)$ is the minimum distance from $y_{i,j}$ to any vertex 
outside of~$V_{i,j}(a)$. Note that $V_{i,j}(a)\subseteq B_{y_{i,j}}(r)$ if~$r$ 
is the circumradius, and $B_{y_{i,j}}(r-\eps)\subseteq V_{i,j}(a)$ for any 
$\eps>0$ if $r$ is the inradius. Any shortest path from $y_{i,j}$ to a 
vertex in~$V_{i,j}(a)$ passes through the gadget~$G_{i,j}$, at most $a$ 
additional gadgets $G_{i',j'}$ with $(i',j')\in A_{i,j}(a)$, the paths of 
length~$1$ connecting these gadgets, and possibly one path of length $1$ to 
reach a boundary vertex of $V_{i,j}(a)$. Hence, by \cref{clm:dists}, the 
circumradius of~$V_{i,j}(a)$ is less than $(8n^2+2)a+(a+1)+4n^2+1 = 
(8n^2+3)a+4n^2+2$, since the distance from~$y_{i,j}$ to any $x^q_{i,j}$ is less 
than $4n^2+1$. To reach any vertex outside of $V_{i,j}(a)$ from $y_{i,j}$ it is 
necessary to first reach $x^q_{i,j}$ for some $q\in[4]$, then pass through 
$a$ gadgets~$G_{i',j'}$ with $(i',j')\in A_{i,j}(a)$, in addition to $a$ paths 
of length~$1$ connecting them and~$G_{i,j}$, and finally pass through another 
path of length $1$ to reach a boundary vertex of $V_{i,j}(a)$. From the 
boundary, a vertex not in $V_{i,j}(a)$ can be reached on some cycle 
$O_{i'',j''}$ with $(i'',j'')\notin A_{i,j}(a)$. The distance from~$y_{i,j}$ 
to~$x^q_{i,j}$ is more than~$4n^2$ and the distance from a boundary vertex 
$x^{q'}_{i'',j''}$ to any vertex of $O_{i'',j''}$ is more than $2n^2-1$. Hence, 
by \cref{clm:dists}, the inradius of~$V_{i,j}(a)$ is more than 
$(7n^2-1)a+(a+1)+4n^2+(2n^2-1)=7n^2 a+6n^2$.

Now consider any ball $B_v(2r)$ of radius $2r$ around some vertex $v$ in 
$G_\mc{I}$ for which we need to bound the number of balls of half the radius 
with which to cover~$B_v(2r)$. Let $y_{i,j}$ be the closest vertex of~$Y$ 
to~$v$. The distance between $y_{i,j}$ and $v$ is at most $2(2n^2+1)=4n^2+2$, 
whether $v$ lies on~$O_{i,j}$ or on one of the paths of length $1$ connecting 
$G_{i,j}$ with an adjacent gadget. Hence the ball $B_v(2r)$ is contained in a 
ball of radius $4n^2+2+2r$ around~$y_{i,j}$. The latter ball is in turn 
contained in the set $V_{i,j}(a)$ centered at $y_{i,j}$ if the ball's radius is 
less than the inradius of~$V_{i,j}(a)$. This in particular happens if 
$4n^2+2+2r\leq 7n^2a+6n^2$, which for instance is true if 
$a=\lceil\frac{2+2r-2n^2}{7n^2}\rceil$. Assume first that $r\geq 12n^2+5$, 
which 
implies that $a>0$ and so $V_{i,j}(a)$ is well-defined.

At the same time, any set $V_{i',j'}(a')$ is contained in a ball of radius $r$ 
around~$y_{i',j'}$ if its circumradius is at most $r$, i.e., 
$(8n^2+3)a'+4n^2+2\leq r$. This is for instance true if 
$a'=\lfloor\frac{r-4n^2-2}{8n^2+3}\rfloor$. Note that $r\geq 12n^2+5$ means 
that $a'\geq 0$ and so $V_{i',j'}(a')$ is well-defined. We may cover all 
vertices of $A_{i,j}(a)$ with $\lceil\frac{2a+1}{2a'+1}\rceil^2$ sets 
$A_{i',j'}(a')$, since in $Y$ these sets correspond to ``squares rotated by 45 
degrees'' (i.e., balls in~$L_1$) of diameter $2a+1$ and $2a'+1$, respectively. 
Thus we can cover $V_{i,j}(a)$ with $\lceil\frac{2a+1}{2a'+1}\rceil^2$ sets 
$V_{i',j'}(a')$, i.e., we can cover a ball of radius $2r$ in $G_\mc{I}$ with 
\[
\left\lceil\frac{2a+1}{2a'+1}\right\rceil^2 \leq 
\left\lceil\frac{2(\frac{2+2r-2n^2}{7n^2})+3}{2(\frac{r-4n^2-2}{8n^2+3})-1}
\right\rceil^2 = 
\left\lceil\frac{(8n^2+3)(4+4r+17n^2)}{7n^2(2r-7-16n^2)}\right\rceil^2 \leq
\left\lceil\frac{9(65r-37)}{7(8r-4)}\right\rceil^2 \leq 121
\] 
balls of radius $r$, using that $r\geq 12n^2+5$ implies $n^2\leq (r-5)/12$.

Next consider the case when $2n^2+1\leq r<12n^2+5$. We know from above that 
$B_v(2r)$ is contained in $V_{i,j}(a)$ if 
$a=\lceil\frac{2+2r-2n^2}{7n^2}\rceil$, which is well-defined as $r\geq 2n^2+1$ 
implies $a\geq 0$. Using $r<12n^2+5$ and $n\geq 2$ we get $a\leq 4$. The set 
$V_{i,j}(4)$ contains at most $(2\cdot 4+1)^2=81$ gadgets. On each of the 
cycles~$O_{i',j'}$ with $(i',j')\in A_{i,j}(4)$ we may choose the four vertices 
$v_{1}$, $v_{4n^2+2}$, $v_{8n^2+3}$, and~$v_{12n^2+4}$ adjacent to $y_{i',j'}$ 
as centers for balls of radius $r$. Note that as $r\geq 2n^2+1$, the vertex 
$y_{i',j'}$, every vertex of the cycle $O_{i',j'}$, and also every vertex on a 
path of length $1$ adjacent to gadget $G_{i',j'}$ is at distance at most $r$ to 
one of these four vertices. Thus at most $4\cdot 81=324$ balls of half the 
radius are needed to cover all vertices of~$B_v(2r)$.

The next case we consider is $n^2-1\leq r < 2n^2+1$. Again, $B_v(2r)$ is 
contained in~$V_{i,j}(a)$ if $a=\lceil\frac{2+2r-2n^2}{7n^2}\rceil$, which is 
well-defined as $r\geq n^2-1$ implies $a\geq 0$. Using $r < 2n^2+1$ and $n\geq 
1$ we obtain $a\leq 1$, which in turn means that the number of gadgets in 
$V_{i,j}(a)$ now is at most $(2a+1)^2\leq 9$. To cover a cycle $O_{i',j'}$ with 
$(i',j')\in A_{i,j}(1)$, we may choose centers for balls of radius $r$ 
equidistantly at every~$\lfloor 2r\rfloor$-th vertex of $O_{i',j'}$, as all 
edges of the cycle have length $1$. We may lower bound $\lfloor 2r\rfloor\geq 
2n^2-3$  using $r\geq n^2-1$. Since every cycle contains $16n^2+4$ vertices, 
the 
number of balls to cover a cycle is at most $16n^2+4/\lfloor 2r\rfloor \leq 
68/5 
\leq 14$, using the previous bound and $n\geq 2$. Hence at most $14\cdot 9=126$ 
balls of half the radius are needed to cover the cycles in $B_v(2r)$. We can 
then cover the $9$ vertices $y_{i',j'}$ where $(i',j')\in A_{i,j}(1)$ and the 
$2\cdot 9+2\cdot 3=24$ paths of length $1$ contained in $V_{i,j}(a)$ with a 
ball 
of radius~$r$ each, as $r\geq n^2-1\geq 1$ using $n\geq 2$. Hence a total of at 
most $9+24+126=159$ balls of half the radius suffice to cover $B_v(2r)$.

Finally, if $r<n^2-1$, then a ball $B_v(2r)$ contains only a subpath of some 
cycle~$O_{i,j}$, a subpath of a path of length $1$ connecting two gadgets, or a 
single vertex $y_{i,j}$, since any edge connecting a cycle $O_{i,j}$ to 
$y_{i,j}$ or some $x^q_{i,j}$ has length more than $2n^2-1>2(n^2-1)> 2r$. In 
this case at most~$3$ balls of radius $r$ suffice to cover all vertices 
of~$B_v(2r)$.
\end{proof}

We next show that we can bound the parameters $p$ and $h$, i.e., the pathwidth 
and highway dimension of~$G_\mc{I}$, linearly by $\kappa$ and 
$k=\Theta(\kappa^2)$, respectively. Note that the following lemma bounds the 
highway dimension in terms of $k$, no matter how restrictive we make 
\cref{dfn:hd} by increasing the constant $c$.

\begin{lem}\label{lem:hd}
For any constant $c$ of \cref{dfn:hd}, the graph $G_\mc{I}$ has highway 
dimension at most~$O(\kappa^2)$.
\end{lem}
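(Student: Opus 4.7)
The plan is to verify \cref{dfn:hd} by exhibiting, for every $r>0$ and every ball $B_{cr}(v)$ in $G_\mc{I}$, a hub set $H\subseteq B_{cr}(v)$ of size $O(\kappa^2)$ (with hidden constant depending on~$c$) meeting every shortest path of length strictly more than $r$ lying inside $B_{cr}(v)$. The structural fact I will lean on is that the connecting paths $P_{i,j}$ and $P'_{i,j}$ are the only bridges between different gadgets, and their endpoints are the $4\kappa^2$ ``portal'' vertices $x^q_{i,j}$; consequently every walk that visits two distinct gadgets passes through some~$x^q_{i,j}$.

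I would split the analysis at $r=D:=8n^2+2$. For $r\geq D$, \cref{clm:dists} together with the easy bounds that $y_{i,j}$ is at distance at most $4n^2+1$ from every vertex of its own gadget, and that any two cycle vertices of $O_{i,j}$ are at distance at most $8n^2+2$, imply that all vertices of $G_{i,j}$ have pairwise distance at most $D$ in $G_\mc{I}$. Hence any shortest path of length more than $r\geq D$ cannot lie inside a single gadget, so it must contain some~$x^q_{i,j}$. Taking $H:=B_{cr}(v)\cap\{x^q_{i,j}\mid i,j\in[\kappa],\,q\in[4]\}$ then yields $|H|\leq 4\kappa^2=O(\kappa^2)$, and by construction this set meets every such path.

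For $r<D$ the ball has radius $cr<cD$, and since consecutive gadgets of the grid are pairwise at distance at least $4n^2-1$, the ball touches only a constant number (depending on~$c$) of gadgets and of connecting paths. I would build $H$ from all $x^q_{i,j}$ and $y_{i,j}$ in the ball, together with, for each cycle $O_{i,j}$ touched by the ball, a hub every $\max(1,\lceil r\rceil)$ cycle-vertices along the portion of $O_{i,j}$ inside $B_{cr}(v)$ (a vertex cover of that portion when $r<1$), and analogously for each connecting path that carries a subpath of length~$>r$. Since the ball contains at most $O(cr)$ vertices of each such local structure while the required spacing is $\max(1,\lceil r\rceil)$, each cycle and each connecting path contributes only $O(c)$ hubs, so $|H|=O(c^3)=O(1)$. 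One then verifies that this hub set meets every shortest path of length $>r$ in $B_{cr}(v)$: cycle arcs are hit by the cycle hubs, connecting-path subpaths by the path hubs, and any path crossing between gadgets or traversing a long edge incident to some $x^q_{i,j}$ or $y_{i,j}$ is hit by the portal or $y$-hubs.

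The main obstacle will be the small-$r$ regime, because portals alone do not suffice: a shortest path of length $>r$ may lie entirely inside a single cycle or connecting path and contain no $x^q$ or $y$ vertex. What rescues the count is that a small $r$ forces the ball itself to be ``locally small'' in $G_\mc{I}$, so constantly many additional hubs per local structure are enough, and the grand total stays absorbed in the $O(\kappa^2)$ coming from the large-$r$ case.
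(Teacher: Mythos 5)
Your proposal is correct and follows essentially the same strategy as the paper's proof: split on scale at the gadget diameter $8n^2+2$ and at $1$, take the portal vertices $x^q_{i,j}$ (the paper also adds the $y_{i,j}$'s) as hubs at large scales, add equispaced hubs on the cycles at intermediate scales and on the connecting paths (plus all local gadget vertices) at small scales, and argue via the grid structure and the pairwise gadget separation that any ball of radius $cr$ meets only $O(c^2)$ gadgets, each contributing $O(c)$ hubs. The cosmetic differences (spacing $\lceil r\rceil$ vs. $\lfloor r\rfloor$, a vertex cover of the cycle portion vs. all of $V(G_{i,j})$ when $r<1$, separation estimate $4n^2-1$ vs. the $7n^2-1$ of \cref{clm:dists}) do not change the argument.
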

\begin{proof}
For any scale $r\in\mathbb{R}^+$ and universal constant $c\geq 4$ we will 
define 
a hub set $H_r\subseteq V$ hitting all shortest paths of length more than $r$ 
in 
$G_\mc{I}$, such that $|H_r\cap B_{cr}(v)|=O(\kappa^2)$ for any ball 
$B_{cr}(v)$ 
of radius $cr$ in $G_\mc{I}$. This bounds the highway dimension to 
$O(\kappa^2)$ 
according to \cref{dfn:hd}.

Let $X=\{y_{i,j},x^q_{i,j}\mid q\in[4] \text{ and } i,j\in[\kappa]\}$ so that 
it 
contains all vertices connecting gadgets $G_{i,j}$ to each other in addition to 
the vertices $y_{i,j}$. If $r>8n^2+2$ then $H_r=X$. Any shortest path 
containing 
only vertices of a cycle $O_{i,j}$ has length at most $8n^2+2$, since the cycle 
has length $16n^2+4$. Any (shortest) path that is a subpath of a path 
connecting 
two gadgets has length at most $1$. Hence any shortest path of length more than 
$8n^2+2$ must contain some vertex of $X$. The total size of~$X$ is $5\kappa^2$, 
and so any ball, no matter its radius, also contains at most this many hubs of 
$H_r$.

If $1\leq r\leq 8n^2+2$ then any path of length more than $r$ but not 
containing 
any vertex of~$X$ must lie on some cycle 
$O_{i,j}=(v_1,v_2,\ldots,v_{16n^2+4},v_1)$. We define the set 
$W^{i,j}_r=\{v_{1+\lambda\lfloor r\rfloor}\in V(O_{i,j}) \mid 
\lambda\in\mathbb{N}_0\}$, i.e., it contains every $r$-th vertex on the cycle 
after rounding down. This means that every path on $O_{i,j}$ of length more 
than 
$r$ contains a vertex of~$W^{i,j}_r$. Thus for these values of $r$ we set 
$H_r=X\cup\bigcup_{i,j\in[\kappa]} W^{i,j}_r$. Any ball $B_{cr}(v)$ of 
radius~$cr$ contains $O(c)$ hubs of any~$W^{i,j}_r$. By \cref{clm:dists}, 
the 
distance between any pair of the four vertices~$x^q_{i,j}$, where $q\in[4]$, 
that connect a gadget $G_{i,j}$ to other gadgets, is more than~$7n^2-1$. This 
means that $B_{cr}(v)$ can only intersect $O(c^2)$ gadgets, since $cr\leq 
c(8n^2+2)\leq 2c(7n^2-1)$ if $n\geq1$ and the gadgets are connected in a 
grid-like fashion. Hence the ball $B_{cr}(v)$ only contains~$O(c)$ hubs for 
each 
of the $O(c^2)$ sets $W^{i,j}_r$ for which $B_{cr}(v)$ intersect the respective 
gadget~$G_{i,j}$. At the same time each gadget contains only~$5$ vertices of 
$X$. Thus if $c$ is a constant, then the number of hubs of $H_r$ in $B_{cr}(v)$ 
is constant.

If $r<1$, a path of length more than $r$ may be a subpath of a path connecting 
two gadgets. Recall that the paths $P_{i,j}$ connecting $x^2_{i,j}$ and 
$x^4_{i,j+1}$ for $j\leq\kappa-1$, and the paths $P'_{i,j}$ connecting 
$x^3_{i,j}$ and $x^1_{i+1,j}$ for $i\leq\kappa-1$, consist of $n+2$ edges of 
length~$\frac{1}{n+2}$ each. If $P_{i,j}=(u_0,u_1,\ldots,u_{n+2})$, we define 
the set $U^{i,j}_r=\{u_{\lambda\lfloor r(n+2)\rfloor}\in V(P_{i,j})\mid 
\lambda\in\mathbb{N}_0\}$, and if $P'_{i,j}=(u_0,u_1,\ldots,u_{n+2})$, we 
define 
the set $\tilde U^{i,j}_r=\{u_{\lambda\lfloor r(n+2)\rfloor}\in V(P'_{i,j})\mid 
\lambda\in\mathbb{N}_0\}$, i.e., these sets contain vertices of consecutive 
distance $r$ on the respective paths, after rounding down. Now let 
$H_r=\bigcup_{i,j\in[\kappa]} V(G_{i,j})\cup 
\bigcup_{i\in[\kappa],j\in[\kappa-1]} 
U^{i,j}_r\cup\bigcup_{i\in[\kappa-1],j\in[\kappa]} \tilde U^{i,j}_r$, so that 
every path of length more than~$r$ contains a hub of $H_r$. Any ball 
$B_{cr}(v)$ 
of radius $cr<c$ intersects only $O(c^2)$ gadgets~$G_{i,j}$, as observed above. 
As the edges of a cycle $O_{i,j}$ have length~$1$, the ball $B$ contains only 
$O(c)$ vertices of $O_{i,j}$. Thus $B_{cr}(v)$ contains $O(c)$ hubs of 
$V(G_{i,j})\cup U^{i,j}_r\cup \tilde U^{i,j}_r$ for each of the $O(c^2)$ 
gadgets 
$G_{i,j}$ it intersects. For constant $c$, this proves the claim.
\end{proof}

\begin{lem}\label{lem:tw}
The graph $G_\mc{I}$ has pathwidth at most~$\kappa+O(1)$.
\end{lem}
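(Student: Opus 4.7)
The plan is to exhibit an explicit path decomposition of $G_\mc{I}$ of width $\kappa + O(1)$ by processing the $\kappa^2$ gadgets in column-major order: first $G_{1,1}, G_{2,1}, \ldots, G_{\kappa,1}$, then $G_{1,2}, \ldots, G_{\kappa,2}$, and so on, interleaving these gadget sweeps with sweeps of the connecting paths $P_{i,j}$ and $P'_{i,j}$. Inside each gadget $G_{i,j}$ I will walk once around the cycle $O_{i,j}$, keeping the five ``special'' vertices $y_{i,j}, x^1_{i,j}, x^2_{i,j}, x^3_{i,j}, x^4_{i,j}$ in every bag of the walk, plus at most two consecutive cycle vertices at a time.

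The key invariant is a set of \emph{row pointers}: for every row $i' \in [\kappa]$, exactly one vertex is kept in the bag that represents the interface between row $i'$ and the yet-unprocessed part of the graph. Concretely, while currently processing some gadget in column $j$, the row pointer for row $i'$ is $x^2_{i', j-1}$ if row $i'$ of column $j$ has not yet been touched, and $x^2_{i', j}$ otherwise. This contributes exactly $\kappa$ vertices. On top of these row pointers, the bag also contains, at any instant: the five port vertices and at most two consecutive cycle vertices of the current gadget; the endpoints and at most two consecutive interior vertices of the connecting path currently being processed (either the horizontal $P_{i,j-1}$ leading into the current gadget, or the vertical $P'_{i-1,j}$ descending from the previous gadget in the column); and possibly a column pointer $x^3_{i-1,j}$ carried between consecutive gadgets within a column. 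This is a constant overhead, so the bag size is $\kappa + O(1)$.

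To assemble the decomposition, before entering $G_{i,j}$ I would introduce $x^4_{i,j}$ together with the vertices of $P_{i,j-1}$ (for $j \geq 2$) and then drop the old row pointer $x^2_{i,j-1}$; introduce $x^1_{i,j}$ together with the vertices of $P'_{i-1,j}$ (for $i \geq 2$) and drop $x^3_{i-1,j}$; introduce $y_{i,j}$, $x^2_{i,j}$, and $x^3_{i,j}$; sweep the cycle $O_{i,j}$ introducing and dropping two cycle vertices at a time while keeping the five port vertices in the bag throughout, so that all chord edges $\{x^q_{i,j}, v_\tau\}$ and $\{y_{i,j}, v_\mu\}$ are witnessed in some bag; and finally drop everything in the gadget except the new row pointer $x^2_{i,j}$ and the column pointer $x^3_{i,j}$.

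The main obstacle is the bookkeeping required by \cref{dfn:pw}: I need to check simultaneously for every vertex that it occupies a contiguous block of bags. Each $x^q_{i,j}$ and $y_{i,j}$ lives only during the introduction and cycle sweep of its own gadget together with the immediately adjacent connecting paths; each cycle or path-interior vertex lives for only two consecutive bags; and each row pointer $x^2_{i', \bullet}$ lives in one contiguous block spanning exactly the gadgets processed between the two transitions that handle its row. The edge-coverage property is then immediate from the sweep design, in particular because the five port vertices stay in the bag throughout the entire cycle sweep of their gadget. Since $\kappa$ bag slots are consumed by the row pointers and the remaining overhead is a constant independent of $n$ and $\kappa$, the overall width is $\kappa + O(1)$.
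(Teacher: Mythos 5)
Your decomposition is correct and follows essentially the same idea as the paper: a column-major sweep over the gadgets that maintains $\kappa$ frontier (``row pointer'') vertices plus $O(1)$ local vertices for the current gadget, cycle, and connecting path. The only differences are bookkeeping details: the paper keeps $x^4_{i',j}$ (rather than your $x^2_{i',j-1}$) as the pointer for an as-yet-unprocessed row of column $j$, and it sweeps all horizontal inter-column paths $P_{\cdot,j}$ in a batch between columns via the auxiliary bags $K'_{i,j}$ rather than one at a time just before each gadget.
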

\begin{proof}
We construct a path decomposition of $G_\mc{I}$ using bags of size 
$\kappa+O(1)$. For each $i,j\in[\kappa]$ we define the sets 
$X^2_{i,j}=\{x^2_{i',j}\mid i'\in[i]\}$ and $X^4_{i,j}=\{x^4_{i',j}\mid 
i'\in[\kappa]\setminus[i-1]\}$ and let 
$K_{i,j}=\{y_{i,j},x^1_{i,j},x^3_{i,j}\}\cup X^2_{i,j}\cup X^4_{i,j}$ be a bag. 
Intuitively, these bags decompose the graph $G_\mc{I}$ ``from left to right'' 
according to \cref{fig:red}. More precisely, using some additional 
intermediate bags, the constructed path decomposition will arrange these bags 
on 
a path with start vertex $K_{1,1}$, such that traversing the path will 
consecutively move from $K_{i,j}$ to $K_{i+1,j}$ for each $1\leq i\leq\kappa-1$ 
and $1\leq j\leq\kappa$, and from $K_{\kappa,j}$ to $K_{1,j+1}$ for each $1\leq 
j\leq\kappa-1$.

To define the intermediate bags, consider a bag $K_{i,j}$ and note that the 
three connected components left after removing all vertices of $K_{i,j}$ from 
$G_\mc{I}$ are (a) the cycle $O_{i,j}$, (b) the subgraph~$L_{i,j}$ ``to the 
left 
of'' $K_{i,j}$ induced by all gadgets $G_{i',j'}$ and paths 
$P_{i',j'},P'_{i',j'}$ for which $j'\leq j-1$ and $i'\leq\kappa$, but also the 
gadgets $G_{i',j'}$ and paths $P'_{i',j'}$  for which $j'=j$ and $i'\leq i-1$, 
and finally (c) the subgraph $R_{i,j}$ ``to the right of'' $K_{i,j}$ induced by 
all gadgets $G_{i',j'}$ and paths $P_{i',j'},P'_{i',j'}$ for which either 
$j'=j$ 
and $i'\geq i+1$, or $j'\geq j+1$ and $i'\leq\kappa$, but also the paths 
$P_{i',j}$ where $i'\leq i$ and the path~$P'_{i,j}$. 

For any $i\leq\kappa-1$, removing the union $K_{i,j}\cup K_{i+1,j}$ from 
$G_\mc{I}$ leaves $L_{i,j}$, $R_{i+1,j}$, $O_{i,j}$, $O_{i+1,j}$, and the path 
$P'_{i,j}$ connecting the gadgets $G_{i,j}$ and $G_{i+1,j}$. The intermediate 
bags connecting $K_{i,j}$ and $K_{i,j+1}$ for $i\leq\kappa-1$ on the path 
decomposition will first cover $O_{i,j}$ and then $P'_{i,j}$: if 
$O_{i,j}=(v_1,v_2,\ldots,v_{16n^2+4},v_1)$, we define a sequence of bags 
$K^\tau_{i,j}=K_{i,j}\cup\{v_1,v_\tau,v_{\tau+1}\}$ where $\tau\in[16n^2+3]$, 
and if $P'_{i,j}=(u_0,u_1,\ldots,u_{n+2})$ where $u_0=x^3_{i,j}$ and 
$u_{n+2}=x^1_{i+1,j}$ then we define a sequence of bags 
$J^\tau_{i,j}=K_{i,j}\cup\{u_{\tau-1},u_{\tau}\}$ for $\tau\in[n+2]$. Note that 
for every edge~$e$ of $O_{i,j}$ or $P'_{i,j}$ there is a bag containing the 
vertices of $e$. Moreover, for every other edge $e$ of gadget $G_{i,j}$ 
connecting $O_{i,j}$ to $x^q_{i,j}$ for $q\in[4]$ or to $y_{i,j}$ there also is 
a bag~$K^\tau_{i,j}$ containing the vertices of $e$. Now, the path 
decomposition 
contains a subpath between the vertices corresponding to $K_{i,j}$ 
and~$K_{i+1,j}$, which starting from $K_{i,j}$ first traverses vertices for 
$K^\tau_{i,j}$ with increasing index~$\tau$, then connects $K^{16n^2+3}_{i,j}$ 
to~$J^1_{i,j}$, then traverses through $J^\tau_{i,j}$ with increasing $\tau$, 
and finally connects $J^{n+2}_{i,j}$ to~$K_{i+1,j}$. That is, the sequence of 
bags defined by the subpath is
\[
 (K_{i,j}, K^1_{i,j}, K^2_{i,j}, \ldots, K^{16n^2+3}_{i,j}, J^1_{i,j}, 
J^2_{i,j}, \ldots, J^{n+2}_{i,j}, K_{i+1,j}).
\]

To connect $K_{\kappa,j}$ to $K_{1,j+1}$ for some $j\leq\kappa-1$, we define 
additional bags $K'_{i,j}=X^2_{i,j}\cup X^4_{i,j+1}$. Starting from 
$K_{\kappa,j}$ and using intermediate bags, the path decomposition will 
traverse 
the bags $K'_{i,j}$ with decreasing index $i$ until reaching~$K_{1,j+1}$. 

We first describe the bags of the path decomposition connecting $K_{\kappa,j}$ 
to the first additional bag $K'_{\kappa,j}$. Defining the intermediate bags is 
similar to above. For any~$i\in[\kappa]$, removing the vertices of $K'_{i,j}$ 
from $G_\mc{I}$ leaves three connected components of which one is~$P_{i,j}$ 
connecting the respective gadgets $G_{i,j}$ and $G_{i,j+1}$, one is a component 
$L'_{i,j}$, which is $L_{1,j+1}$ without the paths $P_{i',j}$ where $i'\leq i$, 
and one is a component~$R'_{i,j}$, which is $R_{\kappa,j}$ without the paths 
$P_{i',j}$ where $i'\geq i$. If $P_{i,j}=(u_0,u_1,\ldots,u_{n+2})$ where 
$u_0=x^2_{i,j}$ and $u_{n+2}=x^4_{i,j+1}$, we define a sequence of bags 
$I^\tau_{i,j}=K'_{i,j}\cup \{u_{\tau-1},u_\tau\}$ for $\tau\in[n+2]$. Note that 
for every edge of $P_{i,j}$ there is a bag $I^\tau_{i,j}$ containing its 
vertices. Now, the path decomposition contains a subpath connecting vertices 
corresponding to $K_{\kappa,j}$ and $K'_{\kappa,1}$, which starting from 
$K_{\kappa,j}$ moves to $K^1_{\kappa,j}$, then through $K^\tau_{\kappa,j}$ with 
increasing index~$\tau$ to cover $O_{\kappa,j}$, and then connects 
$K^{16n^2+3}_{i,j}$ to $I^1_{\kappa,j}$. It then traverses through 
$I^\tau_{\kappa,j}$ with increasing index~$\tau$ to cover $P_{\kappa,j}$, after 
which it moves on to $K'_{\kappa,j}$. That is, the sequence of bags defined by 
the subpath is
\[
(K_{\kappa,j}, K^1_{\kappa,j}, K^2_{\kappa,j}, \ldots, K^{16n^2+3}_{\kappa,j}, 
I^1_{\kappa,j}, I^2_{\kappa,j}, \ldots, I^{n+2}_{\kappa,j}, \ldots, 
K'_{\kappa,j}).
\]

For any $i\leq \kappa-1$ the path decomposition contains a subpath connecting 
$K'_{i+1,j}$ to~$K'_{i,j}$, covering $P_{i,j}$ via the bags $I^\tau_{i,j}$ with 
increasing index $\tau$. The sequence defined by this subpath is
\[
(K'_{i+1,j}, I^1_{i,j}, I^2_{i,j}, \ldots, I^{n+2}_{i,j}, K'_{i,j}).
\]
The last additional bag $K'_{1,j}$ is connected directly to $K_{1,j+1}$ on the 
path decomposition.

Finally, when at $K_{\kappa,\kappa}$ the path decomposition only needs to cover 
$O_{\kappa,\kappa}$ to finish, i.e., to make sure that every vertex of 
$G_\mc{I}$ is contained in some bag. This can be done using the sequence 
$K^\tau_{\kappa,\kappa}$ with increasing index $\tau$, as above. That is, the 
sequence of bags defined by the final subpath of the path decomposition is
\[
(K_{\kappa,\kappa}, K^1_{\kappa,\kappa}, K^2_{\kappa,\kappa}, \ldots, 
K^{16n^2+3}_{\kappa,\kappa})
\]

As argued above, for every edge of $G_\mc{I}$ there is a bag containing its 
vertices. To argue that all bags containing some vertex of $G_\mc{I}$ form a 
subpath of the path decomposition, note that for intermediate bags 
$K^\tau_{i,j}$, $J^\tau_{i,j}$, and $I^\tau_{i,j}$ we have $K^\tau_{i,j}, 
J^\tau_{i,j}\supseteq K_{i,j}$ and $I^\tau_{i,j}\supseteq K'_{i,j}$. Also note 
that every vertex of $X=\{y_{i,j},x^q_{i,j}\mid q\in[4] \text{ and } 
i,j\in[\kappa]\}$ lies in some bag $K_{i,j}$ or $K'_{i,j}$. Let $x\in X$ be a 
vertex that appears in a bag $B$ but not in bag $B'$, and assume first that $B$ 
comes before~$B'$ in the sequence defined by the path decomposition. Since the 
path decomposition traverses $G_\mc{I}$ ``from left to right'', this means 
that $x$ lies in the set $L_{i,j}$ of some bag~$K_{i,j}$, or the set~$L'_{i,j}$ 
of some bag~$K'_{i,j}$, for which $B'\supseteq K_{i,j}$ or $B'\supseteq 
K'_{i,j}$, respectively. Similarly, if $B'$ comes before $B$ in the sequence, 
then $x$ lies in the set $R_{i,j}$ of some bag $K_{i,j}$, or the set $R'_{i,j}$ 
of some bag $K'_{i,j}$, for which $B'\supseteq K_{i,j}$ or $B'\supseteq 
K'_{i,j}$, respectively. Now observe that $L_{i,j}\subset L_{i',j'}$ and 
$R_{i,j}\supset R_{i',j'}$ for any $i<i'$ where $j=j'$ but also for any~$j<j'$, 
while for bags~$K'_{i,j}$, for any $j$ and $i>i'$ we have $L_{\kappa,j}\subset 
L'_{i,j}\subset L'_{i',j}$ and $R'_{i,j}\supset R'_{i',j}\supset R_{1,j}$. This 
means that, by definition of the sequence of bags along the path decomposition, 
if $B$ comes before $B'$ the vertex $x$ cannot appear in any bag after $B'$ 
either, while if $B'$ comes before $B$ then $x$ cannot appear in any bag before 
$B'$ either. As a consequence, the bags containing any $x\in X$ must form a 
subpath of the path decomposition.

It remains to argue about vertices not in $X$. Note that these only occur in 
intermediate bags $K^\tau_{i,j}$, $J^\tau_{i,j}$, and $I^\tau_{i,j}$ on some 
cycle of a gadget or a path connecting gadgets. Furthermore, the vertices of a 
cycle $O_{i,j}$ only occur in the bags~$K^\tau_{i,j}$, vertices of paths 
$P'_{i,j}$ (except the endpoints which lie in $X$) only occur in the 
bags~$J^\tau_{i,j}$, and vertices of paths $P_{i,j}$ (except the endpoints 
which 
lie in $X$) only occur in the bags $I^\tau_{i,j}$. It is thus easy to see from 
the definition of the sequences of bags above, that any vertex not in $X$ only 
lies in bags that form a subpath of the path decomposition.

Note that each bag contains $\kappa+O(1)$ vertices, which concludes the proof.
\end{proof}

The reduction given in \cref{sec:reduction} together with \cref{lem:dd}, 
\cref{lem:hd}, and \cref{lem:tw} imply \cref{thm:hardness}, since the 
\gti problem is W[1]-hard~\cite{pc-book} for parameter~$\kappa$, and we may 
assume w.l.o.g.\ that~$n\geq 2$. Moreover $\kappa=\Theta(\sqrt{k})$ and, under 
ETH, \gti has no $f(\kappa)\cdot n^{o(\kappa)}$ time algorithm~\cite{pc-book} 
for any computable function~$f$.

\section{An algorithm for low doubling metrics}
\label{sec:alg}

In this section we give a simple algorithm that generalizes one of 
\citet{agarwal2002clustering}, which for $D$-dimensional $L_q$ metrics 
computes a $(1+\eps)$-approximation in time~$f(\eps,k,D)\polyn$. In particular, 
any such metric has doubling dimension $O(D)$. Here we assume that the input 
metric has doubling dimension~$d$. A fundamental observation about metrics of 
bounded doubling dimension is the following, which can be proved by a simple 
recursive application of \cref{dfn:dd}. Here the aspect ratio of a set 
$Y\subseteq X$ is the diameter of $Y$ divided by the minimum distance between 
any two points of $Y$.

\begin{lem}[\cite{gupta2003doubling}]\label{lem:aspect}
Let $(X,\dist)$ be a metric with doubling dimension $d$ and $Y\subseteq X$ be a 
subset with aspect ratio~$\alpha$. Then $|Y|\leq 2^{d\lceil\log_2\alpha\rceil}$.
\end{lem}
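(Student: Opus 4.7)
The plan is the standard iterated-doubling argument. Let $\delta = \min_{p \neq q \in Y} \dist(p,q)$ and $D = \diam(Y)$, so that $\alpha = D/\delta$. Fix any point $y_0 \in Y$; since every other point of $Y$ lies within distance $D$ of $y_0$, we have $Y \subseteq B_{y_0}(D)$. This initial enclosing ball is what the doubling property will be applied to.

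Next I would invoke \cref{dfn:dd} repeatedly. Each application covers a ball of radius $2r$ by at most $2^d$ balls of radius $r$, so starting from $B_{y_0}(D)$ and halving $t$ times produces a cover of $B_{y_0}(D)$, and hence of $Y$, by at most $2^{dt}$ balls of radius $D/2^t$. Choosing $t = \lceil \log_2 \alpha \rceil$ forces $2^t \geq \alpha$, so each such small ball has radius at most $D/\alpha = \delta$. Now I would argue that no such small ball can contain two distinct points of $Y$: any two points of $Y$ are at distance at least $\delta$, whereas two points in a common ball of diameter less than $\delta$ are at distance less than $\delta$. Restricting the cover to $Y$ then injects the points of $Y$ into the set of small balls, yielding $|Y| \leq 2^{dt} = 2^{d\lceil\log_2\alpha\rceil}$.

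The main (and essentially only) obstacle is sharpening the last step: as written, my $t$-th-level balls have radius $\leq \delta$, i.e.\ diameter $\leq 2\delta$, which is not quite small enough to strictly separate points of $Y$. The cleanest fix is either to perform one extra halving and absorb the resulting factor $2^d$, or, to recover the bound as stated, to start from a tighter enclosing ball (for instance a $1$-centre of $Y$ of radius at most $D/2$) or equivalently to work with open balls in the doubling definition so the final radius is strictly less than $\delta/2$. Apart from this minor constant-chasing, the proof is a mechanical telescoping of the doubling property.
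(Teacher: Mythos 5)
Your argument is the same iterated-doubling proof the paper has in mind: the paper does not actually prove \cref{lem:aspect} but only remarks that it ``can be proved by a simple recursive application of \cref{dfn:dd}'' and cites \cite{gupta2003doubling}, which is precisely what you carry out. Your worry about the last factor of two is legitimate and in fact exposes a genuine imprecision in the lemma as stated: taking $X=\mathbb{R}$ (doubling dimension $d=1$) and $Y=\{0,1\}$ gives $\alpha=1$, so the claimed bound is $2^{d\lceil\log_2\alpha\rceil}=1$, yet $|Y|=2$. The clean statement your proof actually delivers, after one further halving to push the radius strictly below $\delta/2$, is $|Y|\le 2^{d(\lceil\log_2\alpha\rceil+1)}$, and that is all the paper needs: \cref{lem:aspect} is invoked only to conclude $|Y|\le \eps^{-O(d)}$ per ball, where the extra $2^d$ is absorbed into the $O(\cdot)$. (Your alternative fix of shrinking the enclosing ball to radius $D/2$ does not quite work here, since in an abstract metric there need not be a point of $X$ within $D/2$ of every point of $Y$.)
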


To compute a $(1+\eps)$-approximation to \kc given a graph $G$ with vertex set 
$V$, we first compute its shortest-path metric $(V,\dist)$. We then compute 
several \emph{nets} of this metric, which are defined as follows.

\begin{dfn}
\label{dfn:netcover}
For a metric $(X,\dist)$, a subset $Y\subseteq X$ is called a 
\emph{$\delta$-cover} if for every $u\in X$ there is a $v\in Y$ such that 
$\dist(u,v)\leq \delta$. A \emph{$\delta$-net} is a $\delta$-cover with the 
additional property that $\dist(u,v)>\delta$ for all distinct points~$u,v\in 
Y$. 
\end{dfn}

Note that a $\delta$-net can be computed greedily in polynomial time. The first 
step of our algorithm is to guess the optimum cost $\rho$ by trying each of the 
$n\choose 2$ possible values. For each guess we compute an 
$\frac{\eps\rho}{2}$-net~$Y\subseteq V$. We know that the metric~$(V,\dist)$ 
can 
be covered by $k$ balls of diameter $2\rho$ each, which means that the aspect 
ratio of $Y$ inside of each ball is at most $4/\eps$. Thus by 
\cref{lem:aspect}, each ball contains $1/\eps^{O(d)}$ vertices of $Y$, and 
so 
$|Y|\leq k/\eps^{O(d)}$. 

An optimum \kc solution $C\subseteq Y$ for $(Y,\dist)$ can be computed by brute 
force in ${|Y| \choose k}=k^k/\eps^{O(kd)}$ steps. Since every center of the 
optimum solution $C^*\subseteq V$ of the input graph has a net point of $Y$ at 
distance at most~$\frac{\eps\rho}{2}$, there exists a \kc solution in $Y$ of 
cost at most $(1+\eps/2)\rho$, given that $\rho$ is the optimum cost. The 
computed center set $C\subseteq Y$ thus also has cost at most $(1+\eps/2)\rho$. 
Therefore $C$ covers all of $V$ with balls of radius $(1+\eps)\rho$, since 
every 
vertex of $V$ is at distance $\frac{\eps\rho}{2}$ from some vertex of~$Y$. Thus 
$C$ is a $(1+\eps)$-approximation of the input graph. Considering the guessed 
values of $\rho$ in increasing order, outputting the first computed solution 
with cost at most $(1+\eps)\rho$ gives the algorithm of \cref{thm:alg}.

\paragraph*{Acknowledgements.} We would like to thank the anonymous reviewers, 
who greatly helped to improve the quality of this manuscript.

\bibliography{papers}

\end{document}